\newcommand{\rank}{\textrm{rank}}
\theoremstyle{plain}
\newtheorem{theorem}{Theorem}[section]
\newtheorem{lemma}[theorem]{Lemma}
\newtheorem{proposition}[theorem]{Proposition}
\theoremstyle{definition}
\newtheorem{definition}[theorem]{Definition}
\theoremstyle{remark}
\newcommand{\keywords}[1]{\par\addvspace\baselineskip
\noindent Keywords: \enspace\ignorespaces#1}
\def\squareforqed{\hbox{\rlap{$\sqcap$}$\sqcup$}}
\def\qed{\ifmmode\squareforqed\else{\unskip\nobreak\hfil
\penalty50\hskip1em\null\nobreak\hfil\squareforqed
\parfillskip=0pt\finalhyphendemerits=0\endgraf}\fi}
\newenvironment{proofof}[1]{\begin{trivlist}%
\item[]{\flushleft\em Proof of #1. }}
{\end{trivlist}}
\begin{document}
  \title{Communication Complexities of Symmetric XOR functions}
\author{Yaoyun Shi\footnote{Department of Electrical Engineering and Computer Science, University of Michigan
1301 Beal Avenue, Ann Arbor, MI 48109-2122, USA, Email:
shiyy@eecs.umich.edu}\ \ \ \ \ \ \ Zhiqiang Zhang\footnote{Institute
for Theoretical Computer Science, Tsinghua University, Beijing,
100084, P. R. China. Email: zhang@itcs.tsinghua.edu.cn. Partially
supported by the National Science Foundation of the United States
under Awards 0347078 and 0622033, and the National Natural Science
Foundation of China Grant 60553001, the National Basic Research
Program of China Grant 2007CB807900 and 2007CB807901. It was
partially conducted while the author was visiting University of
Michigan, Ann Arbor.}}

  \maketitle

  \abstract{We call $F:\{0, 1\}^n\times \{0, 1\}^n\to\{0, 1\}$
a symmetric XOR function if for a function $S:\{0, 1, ...,
n\}\to\{0, 1\}$, $F(x, y)=S(|x\oplus y|)$, for any $x, y\in\{0,
1\}^n$,  where $|x\oplus y|$ is the Hamming weight of the bit-wise
XOR of $x$ and $y$.
  We show that for any such function, (a) the deterministic communication complexity is always $\Theta(n)$ except for four
  simple functions that have a constant complexity, and (b) up to  a polylog factor, the error-bounded randomized and
quantum communication complexities are $\Theta(r_0+r_1)$, where
$r_0$ and $r_1$ are the minimum integers such that
   $r_0, r_1\leq n/2$ and $S(k)=S(k+2)$ for all $k\in[r_0, n-r_1)$.}

\keywords{communication complexity, XOR functions, quantum}

\section{Introduction}

The two-party interactive communication model was introduced by
Yao~\cite{yao79} in 1979, and has been widely studied since then for
its simplicity and its power in capturing many of the complexity
issues of communication. Let $X$ and $Y$ be two sets and $F:X\times
Y\to\{0, 1\}$ be a Boolean function. In this model, Alice has an
input $x\in X$, Bob has an input $y\in Y$, and they want to compute
$F(x, y)$ by exchanging messages. If the communication protocol is
deterministic, the least number of bits they need to exchange on the
worst-case input is the {\em deterministic complexity}, denoted by
$D(F)$. If they are allowed to share random bits, the least number
of bits they need to exchange in order to compute $F$ with at least
$2/3$ of success probability is the {\em randomized complexity} of
$F$, denoted by $R(F)$. Yao also initiated the study of quantum
communication complexity~\cite{yao93}, denoted by $Q(F)$, which is
the least number of quantum bits that Alice and Bob need to exchange in
order to compute $F$ with at least $2/3$ of success probability for
any input. In this paper, we allow a quantum protocol to start with an
unlimited amount of quantum entanglement. Evidently, we have
$Q(F)\leq R(F)\leq D(F)$.

A major research theme in communication complexity is to identify
the asymptotic behavior of those variants of complexities for
specific and often elementary functions. A closely related focus is
to identify functions on which the maximum gaps among those
complexities are achieved. Despite numerous studies, both types of
questions are often difficult to answer. For an overview of the
field, an interested reader is referred to \cite{communicationbook,
Brassardsurvey,Buhrmansurvey,Sherstovsurvey}. In this paper, we
focus on the communication complexity of a class of functions that
we call {\em symmetric XOR} functions, and our main results are
tight, or almost tight, characterizations of their deterministic,
randomized and quantum complexities.

To state our main results, let us define the necessary notation.
Throughout this paper, the length of the inputs to Alice and Bob is
denoted by $n$. The Hamming weight of $z\in\{0, 1\}^n$ is denoted by
$|z|$. The bit-wise XOR of $x, y\in\{0, 1\}^n$ is denoted by $x\oplus
y$. A Boolean function $f:\{0,1\}^n\rightarrow\{0,1\}$ is {\em
symmetric} if $f(x)$ depends only on $|x|$, for all $x$.

\begin{definition} A communication problem $F:\{0, 1\}^n\times\{0, 1\}^n\to\{0, 1\}$
is a {\em  XOR function}  if for a  Boolean function $f:\{0,
1\}^n\to\{0, 1\}$, $F(x, y)=f(x\oplus y)$, for all $x, y\in\{0,
1\}^n$. It is said to be {\em symmetric} whenever $f$ is symmetric.
A symmetric XOR function is {\em trivial} if the function or its
negation has $f(x)=0$, for all $x$, or $f(x)=|x|\mod 2$, for all
$x$.
\end{definition}

If $f\equiv 0$, evidently $D(F)=0$. If $f$ is the XOR function,
$D(F)=1$ since it suffices for Alice to send $b=|x|\mod 2$ and Bob
then calculates $b+|y|\mod 2 = F(x, y)$. For nontrivial symmetric
XOR functions, we have the following.

\begin{theorem} \label{deterministic lower bounds}
   For any nontrivial symmetric XOR function $F:\{0, 1\}^n\times\{0, 1\}^n\to\{0, 1\}$, $D(F)=\Theta(n)$.
\end{theorem}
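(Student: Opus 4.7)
The upper bound $D(F)\le n$ is immediate: Alice sends all of $x$, after which Bob knows $|x\oplus y|$ and outputs $S(|x\oplus y|)$. The task is the lower bound $D(F)=\Omega(n)$. The plan is to invoke the log-rank inequality $D(F)\ge \log_2 \rank(M_F)$, where $M_F$ is the $2^n\times 2^n$ matrix with $(M_F)_{x,y}=S(|x\oplus y|)$, and to prove $\rank(M_F)=2^{\Omega(n)}$ whenever $S$ is nontrivial.

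Passing to $\pm 1$-values via $g(z):=(-1)^{S(|z|)}$ and forming $(M_g)_{x,y}=g(x\oplus y)$, the Fourier expansion on $\mathbb{F}_2^n$ yields $M_g=\sum_T \hat g(T)\,\chi_T\chi_T^\top$ with $\chi_T(z)=(-1)^{\sum_{i\in T}z_i}$. Since the rank-one outer products $\chi_T\chi_T^\top$ are pairwise orthogonal in Frobenius inner product, $\rank(M_g)$ equals the Fourier sparsity $|\{T:\hat g(T)\neq 0\}|$, and $\rank(M_F)$ agrees with this up to an additive constant via $M_F=(J-M_g)/2$. Because $g$ is permutation-invariant, $\hat g(T)$ depends only on $|T|$; letting $A\subseteq\{0,1,\ldots,n\}$ collect the levels with nonzero Fourier coefficient, the sparsity equals $\sum_{k\in A}\binom{n}{k}$. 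So it suffices to show that $A$ meets the middle range $[n/4,3n/4]$: any such $k$ contributes $\binom{n}{k}\ge\binom{n}{\lceil n/4\rceil}=2^{\Omega(n)}$.

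To establish the intersection I intend to argue the contrapositive: if $A\subseteq [0,n/4)\cup(3n/4,n]$, then $S$ must be one of the four trivial functions. In the Krawtchouk basis, $g(k)=\sum_t\alpha_t K_t(k)$ with $\alpha_t=\hat g(T)$ for $|T|=t$ and $K_t$ the Krawtchouk polynomial of degree $t$; the symmetry $K_{n-t}(k)=(-1)^k K_t(k)$ then rewrites the hypothesis as $g(k)=P(k)+(-1)^k Q(k)$ with $\deg P,\deg Q<n/4$. Since $g\in\{-1,+1\}$, the polynomial $P+Q$ takes values in $\{-1,+1\}$ at all $\lfloor n/2\rfloor+1$ even integers in $[0,n]$, and $P-Q$ likewise at the $\lceil n/2\rceil$ odd ones. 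But a nonconstant real polynomial of degree $d<n/4$ has at most $d$ preimages of $+1$ and at most $d$ of $-1$, hence fewer than $2d<n/2$ such integer points, contradicting the counts. Therefore both $P+Q$ and $P-Q$ are constants, so $P$ and $Q$ are constants, and $g(k)=c_1+c_2(-1)^k$ on $\{0,\ldots,n\}$; the constraint $g\in\{\pm 1\}$ forces $(c_1,c_2)\in\{(\pm 1,0),(0,\pm 1)\}$, which are exactly the four trivial $S$.

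The main obstacle is this last calibration. One needs the Krawtchouk symmetry $K_{n-t}(k)=(-1)^k K_t(k)$ and a choice of the forbidden band (here $n/4$) small enough that the degree bound from the Krawtchouk decomposition strictly beats the number of same-parity points in $[0,n]$. Both ingredients are elementary, but because the dichotomy between $\rank = 2^{\Omega(n)}$ and one of exactly four trivial functions is tight, the constants have to be tracked precisely, and any weaker control of the degrees would leave room for a spurious fifth nontrivial $S$.
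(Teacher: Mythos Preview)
Your argument is correct and complete. Like the paper, you reduce to showing that the Fourier support of the inner symmetric function contains a weight in the middle band, so that $\rank(M_F)=2^{\Omega(n)}$; where you diverge is in how you force that middle-weight coefficient to be nonzero.

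The paper proves its key Lemma (the analogue of your ``$A$ meets $[n/4,3n/4]$'') by a number-theoretic route: from the vanishing of the middle Fourier levels it extracts, via generating-function manipulations, the identities $T_s\!\left(G\cdot(1-x^2)^u\right)=0$ for all $u\in[n/8,n/2]$; taking $u$ prime and reducing mod~$u$ yields $f_s=f_{s+2u}$, and then two distinct primes in $[n/8,n/2]$ (Bertrand's postulate) combine to give $f_s=f_{s+2}$ for all $s$. Your route is purely polynomial: the Krawtchouk reflection $K_{n-t}(k)=(-1)^kK_t(k)$ folds the high levels onto the low ones, producing $g(k)=P(k)+(-1)^kQ(k)$ with $\deg P,\deg Q<n/4$, after which a root-counting argument on $P\pm Q$ against the $\sim n/2$ same-parity integers in $[0,n]$ forces $P$ and $Q$ to be constants. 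Your argument is shorter, avoids any appeal to the distribution of primes, and in fact yields the tighter band $[n/4,3n/4]$ rather than the paper's $[n/8,7n/8]$; the paper's approach, on the other hand, makes the $2$-periodicity $f_s=f_{s+2}$ emerge in a structurally suggestive way. Minor quibble: ``fewer than $2d$'' should read ``at most $2d$'', but since $2d<n/2\le\lceil n/2\rceil\le\lfloor n/2\rfloor+1$ the contradiction goes through unchanged.
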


To prove the above result, we make use of the following fact that relates
$D(F)$ to the rank of the matrix $M_F=[f(x, y)]_{x, y\in\{0, 1\}^n}$,
denoted by $\rank(M_F)$.

\begin{lemma}[\cite{log rank is lower bound}]\label{lm:logrank}
For any $F:\{0, 1\}^n\times\{0, 1\}^n\to\{0, 1\}$,
$D(F)=\Omega(\log\rank(M_F))$.
\end{lemma}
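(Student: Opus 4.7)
The plan is to use the standard ``protocols partition into monochromatic rectangles'' structural result, and then bound the rank of $M_F$ by the number of $1$-monochromatic rectangles in such a partition. Throughout, let $c=D(F)$ and fix an optimal deterministic protocol $P$ computing $F$ using at most $c$ bits on every input.

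First I would establish, by induction on the depth of $P$'s protocol tree, that $P$ partitions $\{0,1\}^n\times\{0,1\}^n$ into at most $2^c$ combinatorial rectangles, one for each leaf. The inductive invariant is that after any fixed transcript prefix $\pi$, the set of input pairs $(x,y)$ consistent with $\pi$ has the form $A_\pi\times B_\pi$ for some $A_\pi,B_\pi\subseteq\{0,1\}^n$: extending $\pi$ by a bit sent by Alice refines $A_\pi$ (since the bit depends only on $x$ and $\pi$) and leaves $B_\pi$ unchanged, and symmetrically for Bob. Since $P$ is correct, $F$ must be constant on every leaf rectangle, so we obtain a partition of the input space into at most $2^c$ $F$-monochromatic rectangles. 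Let $\mathcal{R}$ be the collection of those rectangles on which $F\equiv 1$; then $|\mathcal{R}|\leq 2^c$.

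Next I would convert this combinatorial partition into a rank bound. For each rectangle $R=A\times B\in\mathcal{R}$, the $\{0,1\}$ matrix $J_R$ indexed by $\{0,1\}^n\times\{0,1\}^n$ with $(J_R)_{x,y}=1$ iff $(x,y)\in R$ is exactly the outer product $\mathbf{1}_A\mathbf{1}_B^{\mathsf{T}}$, hence has rank $1$. Because the rectangles in $\mathcal{R}$ partition the set of $1$-inputs of $F$, we have the matrix identity
\[
M_F \;=\; \sum_{R\in\mathcal{R}} J_R.
\]
Subadditivity of rank then gives $\rank(M_F)\leq\sum_{R\in\mathcal{R}}\rank(J_R)=|\mathcal{R}|\leq 2^c$. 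Taking logarithms yields $D(F)=c\geq\log_2\rank(M_F)$, which is the desired $\Omega(\log\rank(M_F))$ bound.

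The only real step that requires care is the rectangle-partition claim, since it is the place where the structure of interactive two-party communication actually enters; everything afterwards is linear algebra. Once that invariant is set up cleanly, the rest is just assembling the rank-$1$ decomposition and applying subadditivity of rank, so I do not expect a serious obstacle beyond stating the induction precisely.
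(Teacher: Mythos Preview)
Your argument is correct and is exactly the classical proof of the log-rank lower bound. Note, however, that the paper does not actually prove this lemma: it is quoted as a known result from Mehlhorn and Schmidt~\cite{log rank is lower bound} and is used as a black box, so there is no in-paper proof to compare against.
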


It turns out that for a XOR function $F$, $\rank(M_F)$ is precisely
the number of non-zero Fourier coefficients of $f$. Recall that the
Fourier coefficient $\tilde f(w)$, where $w\in\{0, 1\}^n$, is
defined as

\begin{equation}\label{eqn:fourier}
\tilde f(w)=\frac{1}{2^n}\ \sum_{x\in\{0, 1\}^n} (-1)^{x\cdot w} f(x).
\end{equation}

Our main technical contribution is the following lemma.

\begin{lemma}\label{lm:fourier}
For all sufficiently large $n$, and any symmetric function $f:\{0,
1\}^n\to\{0, 1\}$ other than the constant $0$ function, the parity
function and their negations, there exists $w\in\{0, 1\}^n$ such that
$\tilde f(w)\ne 0$ and $n/8 \le |w|\le 7n/8$.
\end{lemma}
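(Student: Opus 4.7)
The plan is to combine a Fourier-analytic encoding with an induction on $n$ that uses restrictions to $n-2$ variables.

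First I would set up notation: since $f$ is symmetric, $\tilde f(w)$ depends only on $|w|$, so write $a_k := \tilde f(w)$ for $|w|=k$. A direct computation of $\sum_{|x|=j}(-1)^{x\cdot w}$ gives
\[
 a_k \;=\; 2^{-n} \sum_{j=0}^{n} g(j)\, K_j(k;n),
\]
where $K_j(k;n)$ is the Krawtchouk polynomial of degree $j$ in $k$ with leading coefficient $(-1)^j 2^j/j!$. Thus $P(k) := 2^n a_k$ is a polynomial in $k$ whose degree equals $\max\{j : g(j)=1\}$. The key structural observation is that the four excluded symmetric functions are precisely those whose weight function satisfies $g(j)=g(j+2)$ for all $j\in[0,n-2]$ (i.e., $g$ is constant on each parity class), so the goal becomes: vanishing of $a_k$ on the middle range forces this parity-periodicity.

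The inductive step proceeds by restricting two coordinates. Consider the symmetric functions $f_{00}(y):=f(0,0,y)$ and $f_{11}(y):=f(1,1,y)$ on $\{0,1\}^{n-2}$. Expanding in the Fourier basis of $f$ one finds
\[
 \widetilde{f_{00}}(v) \;=\; a_{|v|} + 2a_{|v|+1} + a_{|v|+2},\qquad \widetilde{f_{11}}(v) \;=\; a_{|v|} - 2a_{|v|+1} + a_{|v|+2}.
\]
Assuming for contradiction that $a_k=0$ for all $k\in[A,B]:=[\lceil n/8\rceil,\lfloor 7n/8\rfloor]$ while $f$ is not special, both $\widetilde{f_{00}}(v)$ and $\widetilde{f_{11}}(v)$ vanish whenever $|v|,|v|+1,|v|+2\in[A,B]$, i.e.\ for $|v|\in[A,B-2]$. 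Invoking the inductive hypothesis on $n-2$ variables, both $f_{00}$ and $f_{11}$ are one of the four special functions. But the weight function of $f_{00}$ is $j\mapsto g(j)$ on $\{0,\ldots,n-2\}$ and that of $f_{11}$ is $j\mapsto g(j+2)$ on $\{0,\ldots,n-2\}$; the parity-periodicity of both forces $g(j)=g(j+2)$ for $j\in[0,n-4]$ and $g(j+2)=g(j+4)$ for $j\in[0,n-4]$, which combine to give $g(j)=g(j+2)$ for every $j\in[0,n-2]$ — contradicting non-speciality of $f$.

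The main obstacle is the index drift: the vanishing interval $[A,B-2]$ produced for $f_{00},f_{11}$ on $n-2$ variables does not always contain the interval $[\lceil(n-2)/8\rceil,\lfloor 7(n-2)/8\rfloor]$ that the inductive hypothesis literally needs; the two intervals coincide cleanly only when $n\equiv 0\pmod 8$, and can miss by $1$ in other residue classes. I would address this by proving a slightly strengthened inductive statement in which the hypothesis allows vanishing on any interval $[\alpha,\beta]$ with $\alpha\le\lceil m/8\rceil$ and $\beta\ge\lfloor 7m/8\rfloor$, and by establishing a robust base case for some fixed $n_0$ via a direct linear-algebra computation: using $g(j)=\sum_k a_k K_k(j;n)$ together with the identity $K_k(j;n)-K_k(j+2;n)=4K_{k-1}(j;n-2)$, one checks that the kernel of the map $g\mapsto(a_k)_{k\in[A,B]}$, intersected with $\{0,1\}^{n+1}$, consists of exactly the four special functions, because the only Boolean combinations of the constant and parity vectors (which span the $2$-dimensional rational kernel after restricting to the vanishing range) are $0,1,|x|\bmod 2$, and its negation.
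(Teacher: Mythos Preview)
Your restriction identity $\widetilde{f_{00}}(v)=a_{|v|}+2a_{|v|+1}+a_{|v|+2}$ (and its companion for $f_{11}$) is correct, and restricting two coordinates is a natural idea. But the induction does not close, and the fix you propose is vacuous.

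\medskip
\textbf{Why the strengthening does not help.} Your ``strengthened'' hypothesis --- vanishing on any $[\alpha,\beta]\supseteq[\lceil m/8\rceil,\lfloor 7m/8\rfloor]$ forces $f$ special --- is logically equivalent to the original statement (a larger vanishing set trivially implies the smaller one). The real obstruction is the opposite inclusion: after restricting, you only get vanishing on $[A,B-2]$ with $A=\lceil n/8\rceil$, and you need this to \emph{contain} $[\lceil(n-2)/8\rceil,\lfloor 7(n-2)/8\rfloor]$. The lower endpoint fails whenever $\lceil n/8\rceil>\lceil(n-2)/8\rceil$ (e.g.\ $n\equiv 1,2\pmod 8$), and the upper endpoint fails whenever $\lfloor 7n/8\rfloor-\lfloor 7(n-2)/8\rfloor=1$ (e.g.\ $n=17$). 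More structurally: each restriction step shrinks your vanishing window by exactly $2$, while the target window $[\lceil cm\rceil,\lfloor(1-c)m\rfloor]$ shrinks by only about $2(1-2c)<2$; the deficit accumulates and no choice of a fixed fraction $c$ repairs it.

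\medskip
\textbf{The base case claim is also wrong.} The linear map $g\mapsto(a_k)_{k\in[A,B]}$ has rational kernel of dimension $n+1-(B-A+1)\approx n/4$, not $2$. The constant and parity vectors do lie in the kernel, but they do not span it; so ``the only Boolean combinations of the constant and parity vectors'' is not the same as ``the only Boolean vectors in the kernel''. Showing that the Boolean points of this high-dimensional kernel are exactly the four special functions \emph{is} the content of the lemma --- you cannot discharge it as a routine linear-algebra base case for a single $n_0$ and then induct, because (as above) the induction does not propagate.

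\medskip
\textbf{What the paper does instead.} The paper avoids induction entirely. It rewrites $2^n\tilde f_k=T_n\!\big(G(x)(1-x)^k(1+x)^{n-k}\big)$ for the generating polynomial $G(x)=\sum_s f_s x^{n-s}$, and from vanishing for $k\in[n/8,7n/8]$ deduces, via binomial identities, that $T_s\!\big(G(x)(1-x^2)^u\big)=0$ for all $s\in[2u,n]$ and all $u\in[n/8,n/2]$. Taking $u$ an odd prime kills the middle binomial coefficients modulo $u$, giving $f_s=f_{s+2u}$. Two distinct primes $p,q\in[n/8,n/2]$ (Bertrand) plus $\gcd(p,q)=1$ then force $f_s=f_{s+2}$ for all $s$. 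The number-theoretic step is what replaces the recursion you are attempting.
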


By the symmetry of $f$, the above lemma implies that $\tilde f$ has
$2^{\Omega(n)}$ non-zero Fourier coefficients, thus
$\rank(M_F)=2^{\Omega(n)}$. Theorem~\ref{deterministic lower bounds}
then follows from Lemma~\ref{lm:fourier}. Another consequence is that $D(M_F)=\Theta(\log\rank(M_F))$
for all symmetric XOR functions, since both $D(F)$ and $\rank(M_F)$
is a constant when $F$ is trivial. That is, symmetric XOR functions
satisfy the Log-Rank Conjecture of Lov{'a}sz and Saks~\cite{log rank
conjecture}, which states that for all Boolean functions $F$,
$D(F)=\log^{\Omega(1)}\rank(M_F)$.

We now turn to our second main result, which is on the randomized
and the quantum complexities of symmetric XOR functions. The
following two parameters of $F$ are critical to the complexities.
\begin{definition}
Let $F:\{0, 1\}^n\times\{0, 1\}^n\to\{0, 1\}$ be a  symmetric XOR
function, and $F(x, y)=S(|x\oplus y|)$, where
$S:\{0,1,\cdots,n\}\rightarrow \{0,1\}$. Define $r_0=r_0(F)$ and
$r_1=r_1(F)$ to be the minimum numbers $r_0'$ and $r_1'$,
respectively, such that
   $r'_0, r'_1\leq n/2$, and $S(k)=S(k+2)$, for any $k\in [r'_0, n-r'_1)$.
Define $r=r(F)=\max \{r_0, r_1\}$.
\end{definition}

\begin{theorem} \label{quantum and randomized complexity}
  For any symmetric XOR function $F:\{0, 1\}^n\times \{0, 1\}^n\to\{0, 1\}$,
  $Q(F)=\Omega(r)$, and $R(F)=O(r\log^2 r \log\log r)$.
\end{theorem}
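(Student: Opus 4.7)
I would prove the upper bound $R(F) = O(r \log^2 r \log\log r)$ and the lower bound $Q(F) = \Omega(r)$ separately.

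\emph{Upper bound.} The plan is to reduce computing $S(|x \oplus y|)$ to three subtasks chosen by where $|x \oplus y|$ lies. By the defining property of $r_0$ and $r_1$, the value $S(k)$ depends on the exact value of $k$ whenever $k \in [0, r_0) \cup (n - r_1, n]$, and otherwise only on the parity of $k$. The parity is obtained by having Alice send $|x| \bmod 2$ so that Bob recovers $|x \oplus y| \bmod 2$ in $O(1)$ bits. For the left tail, Alice and Bob run a randomized Hamming distance protocol with threshold $r_0$ that, in $O(r \log^2 r \log\log r)$ bits, either outputs the exact value of $|x \oplus y|$ or correctly declares $|x \oplus y| \geq r_0$. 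The right tail is handled symmetrically by running the same protocol on $\bar x$ and $y$ (whose Hamming distance is $n - |x \oplus y|$) with threshold $r_1$. Combining the three outputs and amplifying the error yields $S(|x \oplus y|)$ with bounded error at total cost $O(r \log^2 r \log\log r)$.

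\emph{Lower bound, setup by restriction.} Assume without loss of generality $r = r_0$; the case $r = r_1$ is handled by an analogous padding of Alice's side with $1^{n-m}$ and Bob's with $0^{n-m}$, which shifts the Hamming distance by $n - m$. By the minimality of $r_0$, there exists $k \in [r_0 - 1, n - r_1)$ with $S(k) \neq S(k + 2)$, and this $k$ must equal $r_0 - 1$. Setting $m = 2 r_0$, I would restrict $F$ to inputs of the form $(x', 0^{n-m})$ and $(y', 0^{n-m})$ with $x', y' \in \{0,1\}^m$. Under this restriction $|x \oplus y| = |x' \oplus y'|$, so $F$ becomes a symmetric XOR function $F'$ on $m$ bits whose predicate $S'$ satisfies $S'(m/2 - 1) \neq S'(m/2 + 1)$. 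Because restriction never increases quantum complexity, it suffices to show $Q(F') = \Omega(m)$.

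\emph{Hard middle jump and main obstacle.} For $Q(F') = \Omega(m)$, I would invoke Paturi's theorem: since $S'$ has a sign change at the exact midpoint of its domain, its approximate degree is $\Theta(m)$. Lifting this to a communication lower bound for the XOR composition is then carried out via Sherstov's pattern-matrix method (suitably adapted to $\oplus$-composition), or alternatively by a direct reduction from Inner Product on $\Theta(m)$ bits, which has quantum complexity $\Omega(m)$. Either route yields $Q(F') = \Omega(m) = \Omega(r)$. The main technical obstacle is precisely this last step: converting the single-point midpoint jump of $S'$ into a clean $\Omega(m)$ lower bound on $Q(F')$. The approximate-degree half is classical, but the XOR-lift to communication complexity requires careful deployment of pattern-matrix-type tools and is where the bulk of the technical work lies.
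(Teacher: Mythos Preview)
Your upper-bound sketch matches the paper's protocol: locate $|x\oplus y|$ relative to the two thresholds via the Huang--Shi--Zhang--Zhu Hamming-distance protocol, handle the middle interval by parity, and binary-search the tails with error amplification; the arithmetic leading to $O(r\log^2 r\log\log r)$ is the same.

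The lower-bound proposal, however, has a real gap at exactly the point you flag. You reduce correctly to a symmetric XOR function $F'$ on $m=2r_0$ bits whose predicate jumps at the midpoint, and Paturi indeed gives $\widetilde{\deg}(S')=\Theta(m)$. But neither of your two suggested lifts goes through. Sherstov's pattern-matrix bound is $\Omega(\widetilde{\deg}(f)\cdot\log k)$ for block size $k$, and for $\oplus$-composition one has $k=1$, so the bound vanishes; the paper states this explicitly in the related-work discussion. The Shi--Zhu block-composed bound likewise needs $k$ large. Your alternative ``direct reduction from Inner Product'' is not supplied, and no such reduction to a symmetric XOR function with a single midpoint jump is apparent. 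So the proposal identifies the obstacle but does not overcome it.

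The paper sidesteps the approximate-degree lift entirely. It restricts both inputs to a fixed Hamming-weight slice $|x|=|y|=k$ and uses the identity $|x\oplus y|=2k-2|x\wedge y|$ to rewrite $F$ on that slice as a function of $|x\wedge y|$. This lands squarely in Razborov's framework for symmetric \emph{AND} predicates, whose key lemma gives $Q=\Omega(\sqrt{k\ell})$ when the predicate jumps at $\ell\le k/4$. Choosing $k\approx 2r_0/3$ and $\ell\approx k-(r_0-1)/2$ yields $\Omega(r_0)$; a padding argument handles $r_1$ and the case $r_0\ge 3n/8$. The conceptual payoff is that Razborov's bound already works for block size $1$ in the AND setting, so no new lifting technology is needed.
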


A corollary of the above theorem is the confirmation of the
so-called Log-Equivalence Conjecture \cite{yaoyun block composed
functions}, when restricted to symmetric XOR functions. The
Log-Equivalence Conjecture states that quantum and randomized
communication complexities of any Boolean functions are polynomially
related.

Before we give the details of our proofs, we relate our results to
some other closely related works. That we focus on symmetric XOR
functions was inspired by Razborov's work~\cite{razborov symmetric
predicates} on what he called ``symmetric predicates'' and
subsequent works. A function $F:\{0, 1\}^n\times \{0, 1\}^n\to\{0,
1\}$ is a symmetric predicate if $F(x, y)=S(|x\wedge y|)$, where
$S:\{0, 1, ..., n\}\to\{0, 1\}$ and $x\wedge y\in\{0, 1\}^n$ is the
bit-wise AND of $x$ and $y$. Let $\ell_0$ and $\ell_1$ be the
minimal integers such that $\ell_0, \ell_1\leq n/2$ and that $S$ is
constant in $[\ell_0, n-\ell_1)$. Razborov showed that
$Q(F)=\Theta^*(\sqrt{n\ell_0}+\ell_1)$. Our quantum lower bound is a
technical consequence of Razborov's lower bound. Our classical upper
bound follows the same strategy of Huang et
al.~\cite{hammingdistance} in constructing a $O(d\log d)$-bits
randomized protocol to decide if $|x\oplus y|>d$.

We prove Theorem~\ref{deterministic lower bounds} by Fourier
analysis of Boolean functions, which is a powerful tool for the
study of Boolean functions complexity. The course
notes~\cite{odonnell} provide an excellent survey on the subject.
The closest result to Lemma~\ref{lm:fourier} that we are aware of is
by Lipton et al.~\cite{LiptonR} on a quantity $\Delta(n)$, which is
the minimum integer $n'$ such that any symmetric $f$ other than the
parity functions and the constant $0$ or $1$ functions
has a non-zero Fourier coefficient $\tilde f(w)$
with $1\le |w|\le n'$. They showed that
$\Delta(n)=\Omega(n/\log n)$, which has applications in
computational learning theory. Their method, however, does not seem
to be applicable for our question.

Finally, we note that class of XOR functions is a subset of three
classes of functions studied previously. (1) Shi and
Zhu~\cite{yaoyun block composed functions} studied what they called
{\em block-composed functions}, i.e. functions $F: \{0, 1\}^{k
n}\times \{0, 1\}^{kn}\to\{0, 1\}$ that can be represented as
$F(x_1, x_2, ..., x_n, y_1, y_2, ..., y_n)=f(g(x_1, y_1), g(x_2,
y_2), ..., g(x_n, y_n))$, for all $x_i, y_i\in\{0, 1\}^k$, and some
functions $f:\{0, 1\}^n\to\{0, 1\}$, $g:\{0, 1\}^k\times \{0,
1\}^k\to\{0, 1\}$. Write such an $F$ as $f\Box g$. An XOR function
is thus $f\Box\oplus$ with $k=1$. They showed that $Q(F)$ is
lower-bounded by the approximate polynomial degree of $f$ when
certain conditions on $k$ and $g$ are satisfied. Their bound  does
not apply to XOR functions that their $k$ should be sufficiently
large.

(2) Independent of \cite{yaoyun block composed functions},
Sherstov\cite{pattern matrix} studied what he called {\em pattern
matrices}. Those are block-composed functions for a fixed $g$, $g(x,
(y, w))=x_y\oplus w$, where $x\in\{0, 1\}^k$ is Alice's block, $(y,
w)$ is Bob's block with $y\in\{1, 2, ..., k\}$, $w\in\{0, 1\}$, and
$x_y$ is the $y$'th bit of $x$. Sherstov showed that for such
functions $Q(F)$ is lower bounded by the approximate polynomial
degree of $f$, multiplied by $\log k$. A XOR function is such a
function with $k=1$. However, Sherstov's lower bound vanishes on
this case.

(3) Aaronson~\cite{Aaronson} studied what he called {\em subset
problems}. Let $G$ be a group and $S$ be a subset of $G$. A subset
problem $\textrm{Subset}(G, S)$ is to decide if $x+y\in S$, where
$x, y\in G$ are the inputs of Alice and Bob, respectively. A XOR
function is a subset function with $G$ being the $n$-fold tensor
product of the 2 element finite field. Aaronson derived a general
lower bound on the {\em one-way} quantum communication complexity of
a subset problem. In contrast, we study the two-way communication
complexity.

We give the proofs for our main theorems in the next two sections before concluding
with a discussion on open problems.

\section{Deterministic communication complexity}

In this section we prove Theorem \ref{deterministic lower bounds}.
\begin{proofof}{Theorem~\ref{deterministic lower bounds}}
Let $H=[(-1)^{x\cdot y}]_{x, y\in\{0, 1\}^n}$ be the $2^n\times 2^n$
Hadamard Matrix and $D_F$ be the diagonal matrix with the diagonal
entries $[\tilde f(w)]_{w\in\{0, 1\}^n}$. Then $M_F=HD_FH$. Define
$\|\tilde f\|_0=\left|\{ \tilde f(w)\ne 0: w\in\{0, 1\}^n\}\right|$.
Since $H$ is orthogonal,
\begin{equation}\label{eqn:rank} \rank(M_F)=\|\tilde f\|_0.\end{equation}
By the symmetry assumption on $f$, $\tilde f$ is also symmetric.
That is, if $\tilde f(w)\ne 0$, $\tilde f(w')\ne 0$ for all $w'$
with $|w'|=|w|$. Therefore, by Lemma~\ref{lm:fourier}, to be proved
below, $\|\tilde f\|_0=2^{\Omega(n)}$. The theorem follows from
Eqn.~(\ref{eqn:rank}) and Lemma~\ref{lm:logrank}.
\end{proofof}

We now prove Lemma~\ref{lm:fourier}.

\begin{proofof}{Lemma~\ref{lm:fourier}}
Suppose that for a symmetric $f$, $\tilde f(w) = 0$ for all
$w\in\{0, 1\}^n$ with $|w|\in [n/8, 7n/8]$, we shall prove that $f$
is one of the four excluded functions.

For a polynomial $g$ and an integer $s$, let $T_s(g)$ denote the coefficient of
the monomial $x^s$ in $g$, and $G$ be the polynomial $\sum_{s=0}^n
f_sx^{n-s}$. Let
$f_k=f(1^k0^{n-k})$ and $\tilde{f}_k=\tilde{f}(1^k0^{n-k})$. Then
by the symmetry of $f$ and $\tilde f$,
  \[ \tilde{f}_k =\frac1{2^n}\sum_{y\in\{0, 1\}^n}\ f(y)(-1)^{1^k0^{n-k}\cdot y} =\frac1{2^n}\sum_y\ f(y)(-1)^{y_1+\cdots+y_k}.\]
Grouping $y$ by its Hamming weight, we have
  \[ \tilde{f}_k=\frac1{2^n}\sum_{s=0}^n\ f_s \sum_{|y|=s}(-1)^{y_1+\cdots+y_k}=\frac1{2^n}\sum_s f_s \sum_{t=0}^k (-1)^t{k\choose t}{n-k\choose s-t}.\]
Since $\sum_t (-1)^t{k\choose t}{n-k\choose s-t}$ is the coefficient
of the monomial $x^s$ in the polynomial $(1-x)^k(1+x)^{n-k}$ and
$f_s$ is that of $x^{n-s}$ in the polynomial $G$,
  \[ \tilde{f}_k=\frac1{2^n}\sum_s f_s T_s((1-x)^k(1+x)^{n-k})=\frac1{2^n}T_n(G\cdot(1-x)^k(1+x)^{n-k}).\]
Thus the assumption that  $\tilde{f}_k=0$ for all $t\leq k\leq n-t$
is equivalent to
$$T_n(G\cdot(1-x)^k(1+x)^{n-k})=0,\quad\textrm{for all $k$,
$t\leq k \leq n-t$}.$$
It follows that for
any $t\leq i, j\leq n-t$ with $i+j\leq n$,
\begin{eqnarray*}
  0&=&\sum_{s=i}^{n-j}T_n\left(G\cdot(1-x)^s(1+x)^{n-s}{n-i-j\choose
  s-i}\right)\\
  &=&T_n\left(G\cdot (1-x)^i(1+x)^j\sum_{s=i}^{n-j}{n-i-j\choose
  s-i}(1-x)^{s-i}(1+x)^{n-i-j-(s-i)}\right)\\
  &=&T_n(G\cdot(1-x)^i(1+x)^j\cdot 2^{n-i-j}).
\end{eqnarray*}
Therefore, for any $i, j$ with $t\leq i, j\leq n-t$ and $i+j\leq n$, we have
\begin{equation}\label{tnG}T_n(G\cdot(1-x)^i(1+x)^j)=0.\end{equation}
Let $u$ be an integer with $t\le u\le n/2$. We set $i=u$.
Setting $j=u$, and $j=u+1$, respectively, Eqn.~(\ref{tnG}) becomes $$T_n(G\cdot(1-x^2)^u) =
0,$$
and
$$0=T_n(G\cdot(1-x^2)^u(1+x))=T_n(G\cdot(1-x^2)^u)+T_{n-1}(G\cdot(1-x^2)^u).$$
Thus $$T_{n-1}(G\cdot(1-x^2)^u)=0.$$
Setting $j=u+2$ in Eqn.~(\ref{tnG}), we have
\begin{eqnarray*}
0&=&T_n(G\cdot(1-x^2)^u(1+x)^2)\\
&=&T_n(G\cdot(1-x^2)^u)+2T_{n-1}(G\cdot(1-x^2)^u)+T_{n-2}(G\cdot(1-x^2)^u).\end{eqnarray*}
Therefore
$$T_{n-2}(G\cdot(1-x^2)^u)=0.$$
Continuing this process till $i=u, j=n-u$, we have
\[ T_s(G\cdot(1-x^2)^u)=0,\quad\textrm{for all $s$, $2u\le s\le n$}.\]
 Expanding $G\cdot(1-x^2)^u$, we have
  \[T_s(G\cdot(1-x^2)^u)=T_s\left(\sum_kf_kx^{n-k}\sum_l{u\choose l}(-1)^lx^{2l}\right)
  =\sum_l{u\choose l}(-1)^lf_{n-s+2l}=0.\]
If $u$ is an odd prime, for all $l$, $1\leq l \leq u-1$,  $u\mid {u \choose l}$.
Thus
$$u\mid(f_{n-s}-f_{n-s+2u}),$$ since both $f_{n-s}$ and $f_{n-s+2u}$
are either $1$ or $0$. This implies that $f_{n-s}=f_{n-s+2u}$.
That is, for any odd prime $u\in[t, n/2]$, it holds that for any $s$ with $s\leq n-2u$,
$$f_s=f_{s+2u}.$$

Bertrand's Postulate\cite{prime} states that for any integer $m>3$,
there is at least one prime number between $m$ and $2m$. So we can
take two different primes $p,q\in [t, n/2]$ (recall that $t=n/8$) when $n\ge 32$, such that
$f_s=f_{s+2p}$, and $f_s=f_{s+2q}$. By Chinese Remainder Theorem, this
implies $f_s=f_{s+2}$ for all $s$. Then $f$ must be one of the four
functions excluded in the statement of the theorem.
\end{proofof}

\section{Randomized and quantum complexities}

In this section, we prove theorem \ref{quantum and randomized
complexity}. The proof has two parts, a lower bound proof and a
protocol. Both proofs are along the same line as those in
Huang et al.~\cite{hammingdistance} on the Hamming distance functions.
\begin{proposition}\label{quantum lower bounds}
  For any symmetric XOR function $F(x,y)=D(|x\oplus y|)$,
  $Q(F)=\Omega(r)$.
\end{proposition}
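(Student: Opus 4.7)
The plan is to derive $Q(F)=\Omega(r)$ by reducing a suitable symmetric predicate to $F$ and invoking Razborov's lower bound $Q(H)=\Omega^*(\sqrt{n\ell_0}+\ell_1)$ for symmetric predicates. By the symmetry obtained from replacing $y$ with $\bar y$ (which sends $|x\oplus y|$ to $n-|x\oplus y|$ and hence swaps the roles of $r_0$ and $r_1$), it suffices to establish $Q(F)=\Omega(r_0)$; the $\Omega(r_1)$ bound then follows by applying the same reduction to the complemented problem.

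For the $\Omega(r_0)$ bound I would restrict Alice's and Bob's inputs to a fixed high Hamming weight slice with $|x|=|y|$ close to $n-r_0/2$, choosing the parity of $|x|+|y|$ so that the guaranteed period-$2$ failure of $S$ at $k=r_0-1$ is actually visible in the reduction. On this slice, the identity $|x\oplus y|=|x|+|y|-2|x\wedge y|$ makes $|x\oplus y|$ range over $[0,r_0]$, exactly the region where $S$ fails to be $2$-periodic, so $F$ computes a symmetric predicate $H(x,y)=T(|x\wedge y|)$ on the slice. Extending $T$ to all of $[0,n]$ by constants outside its natural domain concentrates all complications of $T$ near $j=n-r_0$, yielding Razborov parameters $\ell_0=O(1)$ and $\ell_1=\Theta(r_0)$. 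Razborov's bound then gives $Q(H)=\Omega(\ell_1)=\Omega(r_0)$, and since any protocol for $F$ solves $H$ on the chosen slice, $Q(F)\geq Q(H)=\Omega(r_0)$.

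Combined with the analogous $\Omega(r_1)$ bound obtained by the $y\mapsto\bar y$ symmetry, this yields $Q(F)=\Omega(r_0+r_1)=\Omega(r)$, as required.

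The main technical obstacle is verifying that Razborov's lower bound survives the restriction of inputs to a fixed Hamming weight slice. For the $\ell_1$ regime of Razborov's theorem this is natural, since the hard distribution already concentrates on inputs of extreme weight; adapting the dual polynomial / generalised discrepancy certificate to the exact slice $|x|=|y|\approx n-r_0/2$ should go through with only cosmetic changes. A secondary subtlety is the parity-matching between the jumps of $S$ and the admissible integer values of $|x\wedge y|$ on the slice, which dictates the precise choice of $|x|+|y|$. Carrying out these adjustments is the ``technical consequence of Razborov's lower bound'' referenced in the introduction.
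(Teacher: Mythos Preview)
Your high-level plan---restrict to an equal-weight slice, convert $|x\oplus y|$ to $|x\wedge y|$ via $|x\oplus y|=|x|+|y|-2|x\wedge y|$, invoke Razborov, and handle $r_1$ by the flip $y\mapsto\bar y$---is exactly the paper's strategy. The parity issue you flag is also real; the paper resolves it by splitting $S$ into its even and odd parts $S_0,S_1$ and using that the parity of $|x\oplus y|$ costs $O(1)$ bits, rather than by adjusting $|x|+|y|$.

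There is, however, a genuine gap in your choice of slice. With $|x|=|y|\approx n-r_0/2$, the range of $|x\wedge y|$ is $[n-r_0,\,n-r_0/2]$, and the \emph{guaranteed} change of $S$ (at $r_0-1$ versus $r_0+1$) lands at the very edge of this interval. After complementing both inputs you are looking at a predicate on $(r_0/2)$-weight strings whose only guaranteed change is between $|\bar x\wedge\bar y|=0$ and $1$---essentially set disjointness on $(r_0/2)$-subsets of $[n]$. That problem has quantum complexity $\Theta(\sqrt{r_0})$ up to logs (there is an $O(\sqrt{r_0}\,\mathrm{polylog})$ Grover-based protocol), so \emph{no} argument on your slice can yield more than $\Omega(\sqrt{r_0})$. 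In particular, the hope that Razborov's $\Omega(\ell_1)$ bound for the unrestricted predicate $H$ transfers to this slice with only ``cosmetic changes'' is false: the hard distribution for the $\ell_1$ regime does not live on your slice, and the inequality you wrote, $Q(F)\ge Q(H)$, is not what the reduction gives---you only get $Q(F)\ge Q(H|_{\text{slice}})$.

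The paper avoids this by placing the change point in the \emph{interior} of the slice. Working directly at low weight, it takes $|x|=|y|=k\approx 2r_0/3$; the change of $S$ near $r_0$ then becomes a change of the AND-predicate at $l\approx k-r_0/2\approx r_0/6\approx k/4$, and the slice form of Razborov's bound (Lemma~\ref{Razborov disjointness lower bounds}) yields $\Omega(\sqrt{kl})=\Omega(r_0)$ directly. A short padding reduction handles the residual case $r_0>3n/8$, where the constraint $k\le n/4$ would otherwise fail.
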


To prove this lower bound, we restrict the problem on those pairs of
inputs with an equal Hamming weight. For an integer $k$, where $0\leq
k\leq n$, define $X_k=Y_k=\{x\in\{0,1\}^n:|x|=k\}$. For a function
$S:\{0,1,\cdots, n\}\rightarrow \{0,1\}$, let $F_S$  be
the function $F_S(x,y)=S(|x\oplus y|)$. The restriction of $F_S$ on
$X_k\times Y_l$, where $0\le k, l\le n$, is denoted by $F_{k,l,S}$.
We shall use the following key lemma of Razborov~\cite{razborov symmetric predicates}.

\begin{lemma}[Razborov\cite{razborov symmetric predicates}]\label{Razborov disjointness lower bounds}
Suppose $k\leq n/4$ and $l\leq k/4$. Let $S:\{0,1,\cdots,
k\}\rightarrow \{0,1\}$ be any Boolean predicate such that
$S(l)\not=S(l-1)$. Let $f_{n,k,S}:X_k\times Y_k\rightarrow \{0,1\}$
be the function such that $f_{n,k,S}(x,y)=S(|x\wedge y|)$. Then
$Q(f_{n,k,S})=\Omega(\sqrt{kl})$.
\end{lemma}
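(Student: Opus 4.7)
The plan is to follow Razborov's dual-polynomial approach: translate a hypothetical quantum protocol for $f_{n,k,S}$ into a low-degree univariate polynomial approximating the predicate $S$, then invoke Paturi's lower bound on the approximate polynomial degree of Boolean symmetric functions.

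Suppose there is a quantum protocol of cost $c$ qubits and bounded error for $f_{n,k,S}$. The first step is to extract a low-complexity matrix approximation from the protocol: standard unfolding of a quantum protocol expresses the acceptance probability $P_{\text{acc}}(x,y)$ as $\langle u(x),v(y)\rangle$ with $u(x),v(y)$ lying in a space of dimension $2^{O(c)}$, so the $\{\pm 1\}$ sign matrix of $f_{n,k,S}$ on $X_k\times Y_k$ admits an entrywise approximation to within $1/3$ by a matrix of $\gamma_2$-norm at most $2^{O(c)}$. The second step is to symmetrize. The domain $X_k\times Y_k$ and the function $S(|x\wedge y|)$ are both invariant under the diagonal action of the symmetric group on coordinates, so averaging the approximating matrix under this action preserves both the approximation error and the norm bound while yielding a matrix whose $(x,y)$-entry depends only on $|x\wedge y|$. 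Decomposing this averaged matrix in the Johnson-scheme basis (dual Hahn polynomials) then extracts a univariate polynomial $q$ of degree $O(c)$ that approximates $S(j)$ to within $1/3$ for $j\in\{0,1,\ldots,k\}$.

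Finally, since $S(l)\neq S(l-1)$ with $l\leq k/4$, Paturi's theorem on the approximate polynomial degree of Boolean symmetric functions gives $\deg(q)=\Omega(\sqrt{l(k-l+1)})=\Omega(\sqrt{lk})$, the second equality using $l\leq k/4$. Combining with $\deg(q)=O(c)$ yields $c=\Omega(\sqrt{kl})$, as claimed. The step I expect to be the main obstacle is the passage from the low $\gamma_2$-norm approximation to a univariate polynomial of matching degree. For quantum query algorithms, symmetrizing the acceptance polynomial is clean because that polynomial literally has low degree in the input bits; the communication analog has no such direct polynomial, so one must mediate through matrix norms and carry out a careful representation-theoretic analysis of the symmetric-group action on the Johnson scheme $J(n,k)$ to ensure that the univariate polynomial extracted from the symmetrized matrix has degree $O(c)$ and still approximates $S$ uniformly across $\{0,1,\ldots,k\}$.
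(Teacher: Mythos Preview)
The paper does not prove this lemma; it is quoted from Razborov~\cite{razborov symmetric predicates} and used as a black box in the proof of Proposition~\ref{quantum lower bounds}. There is thus no in-paper proof to compare your proposal against.

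That said, your outline is a faithful high-level sketch of Razborov's original argument. Two minor points of alignment: Razborov works with the (normalized) trace norm rather than the $\gamma_2$ norm, and his presentation is dual to yours---instead of ``protocol $\Rightarrow$ low-degree polynomial, then invoke Paturi,'' he builds an explicit witness matrix out of the Johnson-scheme eigenspaces and certifies the lower bound via its spectral properties. These are two sides of the same LP duality, so the distinction is cosmetic. Your identification of the main obstacle is exactly right: the step from a norm-$2^{O(c)}$ approximation to a degree-$O(c)$ univariate polynomial is where the hard work lies, and it requires the eigenvalue estimates for the dual Hahn polynomials on $J(n,k)$ that occupy most of Razborov's paper. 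Your sketch does not supply those estimates, so as written it is a correct plan rather than a proof, but the plan has no wrong turns.
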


\begin{proofof}{Proposition \ref{quantum lower bounds}}
For any XOR function $F(x,y)=S(|x\oplus y|)$, it can be decomposed
into two parts $F=F_{S_0}\wedge F_{S_1}$, where $F_{S_0}$ and
$F_{S_1}$ are XOR functions with the underlying functions $S_0,
S_1:\{0, 1\}^n\to\{0, 1\}$ defined as follows:
$S_0(x)=S(x)$ when $|x|$ is even, otherwise $S_0(x)=0$;
$S_1(x)=S(x)$ when $|x|$ is odd, otherwise $S_1(x)=0$. Since Alice
and Bob can compute the parity of $|x\oplus y|$ through a
$O(1)$-bits protocol,
\begin{equation}Q(F_{S_0}), Q(F_{S_1})\leq Q(F)+O(1)\label{D0D1D}.\end{equation}
Let $r_0^0=r_0(S_0)$ and $r_1^0=r_1(S_0)$. We have
$S_0(r_0^0-1)\not=S_0(r_0^0+1)$. We want to show
$Q(F_{S_0})=\Omega(r_0^0)$.

If $r_0^0\leq 3n/8$, this
will be proved by constructing another predicate $S':[k]\rightarrow
\{0,1\}$ for Lemma \ref{Razborov disjointness lower bounds} by
$S'(t)=S_0(2k-2t)$, here $k$ is a parameter determined later. We
define a predicate $f_{n,k,S'}$ on $X_k\times Y_k$ by
$f_{n,k,S'}=S'(x\wedge y)$. Because $|x\oplus y|=|x|+|y|-2|x\wedge
y|$ for any $x\in X_k$ and $y\in Y_k$, the two functions $F_{k,k,S}$ and
$f_{n,k,S'}$ are identical. Therefore,
$Q(f_{n,k,S'})=Q(F_{k,k,S_0})$.

Since $S_0(r_0^0-1)\not=S_0(r_0^0+1)$, then
$S'(k-(r_0^0-1)/2)\not=S'(k-(r_0^0+1)/2)$. For $r_0^0<3n/8$, let
$k=[2r_0^0/3]$, we have $k\leq n/4$ and $l\leq k/4$. By lemma
\ref{Razborov disjointness lower bounds}, we have $Q(F_{S_0})\geq
Q(F_{k,k,S_0})= Q(f_{n,k,S'}) \geq\Omega(\sqrt{kl})=\Omega(r_0^0)$.

When $r_0^0\geq 3n/8$, we will reduce to the previous case. Let
$n'=3n/4$ and consider the function $S_0':[n']\rightarrow\{0,1\}$
defined by $S_0'(x)=S_0(n-n'+x)$. Notice that the corresponding $r_0^0$
for $S_0'$ is $r_0^0-(n-n')$, which satisfies $r_0^0-(n-n')\leq
n/2-(n-n')=n/4\leq 9n/32=3n'/8$, and that $F_{S_0'}(x,y)=S_0'(|x\oplus
y|)$ is embedded to $F$. Therefore, $Q(F_{S_0})\geq
Q(F_{S_0'})=\Omega(r_0^0-(n-n'))=\Omega(n)=\Omega(r_0^0)$.

Consider the function $S_1'$ with $S_1'(x)=S_1(1+x)$.
Since $F_{S_1'}(x,y)=F_{S_1}(0x, 1y)$,
$F_{S_1'}$ is embedded in $F_{S_1}$. Let $r_0^1=r_0(S_1)$ and $r_1^1=r_1(S_1)$.
Then the corresponding $r_0(S_1')=r_0^1-1$. Similar to the case of
$S_0$, we have $Q(F_{S_1})\geq Q(F_{S_1'})\geq
\Omega(r_0^1-1)=\Omega(r_0^1)$.

Since $r_0=\max(r_0^0,r_0^1)$, Eqn.~[\ref{D0D1D}] implies that
$Q(F)=\Omega(r_0)$. Consider $\bar{S}(x)=S(n-x)$, then the
corresponding $r_0$ of $\bar S$ is exactly $r_1$. Since
$F_{\bar{S}}(x,y)=F(\bar{x}, y)$ (here $\bar{x}$ means the bit-wise
flipping of $x$), $F_{\bar{S}}$ and $F$ are actually equivalent so
that we have $Q(F)=Q(F_{\bar{S}})=\Omega(r_1)$. Combining the lower
bounds by $r_0$ and $r_1$, we have $Q(F)=\Omega(\max(r_0,r_1))$.
\end{proofof}

We now turn to the construction of a randomized protocol for
symmetric XOR functions.  Recall that the Hamming distance function
$\textrm{HAM}_{n,d}$ is defined as follows:
$\textrm{HAM}_{n,d}(x,y)=1 $ iff $|x\oplus y|> d$. Huang et
al.~\cite{hammingdistance} constructed an efficient randomized {\em
one-way} communication protocol for $\textrm{HAM}_{n, d}$, where Bob
is not allowed to send messages to Alice.
\begin{lemma}[Huang et al.
\cite{hammingdistance}]\label{protocolofhammingdistance}
There is a randomized one-way communication protocol for $\textrm{HAM}_{n, d}$ using $O(d\log d)$
bits.
\end{lemma}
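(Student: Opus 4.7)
The plan is to reduce the Hamming--distance threshold test to $d$-sparse recovery over $\mathrm{GF}(2)$. Writing $z=x\oplus y$, we have $|x\oplus y|=\mathrm{wt}(z)$, so a one-way protocol from Alice is nothing more than a randomized $\mathrm{GF}(2)$-linear sketch $Mx$: Bob forms $Mz=Mx\oplus My$ and decides whether $\mathrm{wt}(z)>d$. It therefore suffices to exhibit a sketch of $O(d\log d)$ bits, together with a decoding rule, that distinguishes $\mathrm{wt}(z)\le d$ from $\mathrm{wt}(z)>d$ with error at most $1/3$.

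The natural construction is an Invertible Bloom Lookup Table (IBLT) of Eppstein, Goodrich, Mitzenmacher and Varghese, adapted to the $\mathrm{GF}(2)$ setting. Alice and Bob share randomness fixing $c=3$ hash functions $h_1,\ldots,h_c:[n]\to [B]$ with $B=\Theta(d)$ buckets and a random fingerprint $\phi:[n]\to\{0,1\}^{\lceil C\log d\rceil}$ for a large constant $C$. For each pair $(j,b)$ Alice transmits two fields: the parity $\bigoplus_{i:h_j(i)=b}x_i$ ($1$ bit) and the fingerprint XOR $\bigoplus_{i:h_j(i)=b,\,x_i=1}\phi(i)$ ($O(\log d)$ bits). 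The total message length is $O(cB\log d)=O(d\log d)$. Bob constructs the analogous table from $y$ and XORs cellwise with Alice's, obtaining the IBLT of the difference set $D=\{i:x_i\ne y_i\}$. He then peels: in any cell $(j,b)$ whose parity is $1$, scan the candidate set $h_j^{-1}(b)$ for the unique $i^\star$ with $\phi(i^\star)$ equal to the cell's XOR; add $i^\star$ to his tentative reconstruction of $D$ and subtract its contribution from the cells $\{(j',h_{j'}(i^\star)):j'\in[c]\}$. Iterate. Output $\textrm{HAM}_{n,d}(x,y)=0$ iff the process terminates with at most $d$ items recovered and every cell zero, else output $1$.

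On the light side $|D|\le d$, termination with full recovery succeeds with probability $1-o(1)$ by the classical random-hypergraph $2$-core analysis of Luby--Mitzenmacher--Shokrollahi and Molloy, provided $c=3$ and $B/d$ exceeds a known constant threshold. On the heavy side $|D|>d$, the only way Bob can be fooled into outputting $0$ is through either a fingerprint collision among elements of $D$ or a spurious single-position match in some cell; a union bound over $\binom{d}{2}$ fingerprint pairs and over the at most $cB$ cells bounds this by $O(d^{2-C})+O(d^{2-C})=o(1)$ for $C$ large. The main obstacle is precisely what the fingerprint buys us: each cell must carry $O(\log d)$ bits rather than the naive $O(\log n)$. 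This is safe because Bob needs only to \emph{count} elements of $D$, not to name them outside the sparse support, and a short random fingerprint is injective on $D$ with high probability. Amplifying the $1-o(1)$ success probability to the required $2/3$ by a constant number of parallel repetitions leaves the message length at $O(d\log d)$, proving the lemma.
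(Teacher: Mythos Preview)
First, a framing remark: the paper does not prove this lemma at all---it is quoted as a black box from Huang, Shi, Zhang and Zhu~\cite{hammingdistance}. So there is no ``paper's own proof'' to compare against; I can only assess your argument on its merits.

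Your IBLT sketch is a natural idea, but as written it has a real gap at exactly the step you flag as the ``main obstacle.'' You use fingerprints of length $C\log d$ and then, to peel, you ``scan the candidate set $h_j^{-1}(b)$ for the unique $i^\star$ with $\phi(i^\star)$ equal to the cell's XOR.'' The preimage $h_j^{-1}(b)$ has size roughly $n/B=\Theta(n/d)$, and each of its elements matches a fixed $C\log d$-bit fingerprint with probability $d^{-C}$. Hence the expected number of matches is $\Theta(n\,d^{-C-1})$, which is not $o(1)$ once $n$ is super-polynomial in $d$; there is no ``unique $i^\star$'' in general. Without identifying the true index you cannot subtract its contribution from the correct cells $(j',h_{j'}(i^\star))$, so peeling breaks. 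Your justification that ``Bob needs only to count elements of $D$, not to name them'' does not rescue this: counting via peeling still requires knowing, for each peeled item, \emph{where} to subtract. The same $n$-dependence also invalidates your heavy-side analysis: the union bound you quote is over $\binom{d}{2}$ pairs, but on the heavy side $|D|$ may be far larger than $d$, and the ``spurious single-position match'' probability again scales with $n/d$, not with $d$ alone.

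There is a standard fix, which is in fact close in spirit to the original Huang et al.\ protocol: first use the shared randomness to reduce the universe, hashing $[n]\to[d^K]$ for a large constant $K$ and replacing each coordinate of $x$ (resp.\ $y$) by the XOR of the coordinates in its fiber. This costs nothing in communication, makes $\log(\text{universe})=O(\log d)$, and lets an ordinary IBLT with genuine $O(\log d)$-bit identifiers work. One must then argue separately that this reduction does not turn a heavy instance ($|x\oplus y|>d$) into a light one except with small probability; this needs a short case analysis over the range of $|x\oplus y|$, which your write-up does not contain. Once you add the universe reduction and redo the heavy-side error bound accordingly, the approach goes through.
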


We will make use of their protocol to prove the following.
\begin{proposition}
  There is a $O(r\log^2 r \log\log r)$ randomized protocol for any
  symmetric XOR function $F(x,y)=f(x\oplus y)$.
\end{proposition}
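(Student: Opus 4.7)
The plan is to reduce the problem to $O(\log r)$ instances of the Hamming-distance decision problem and invoke Lemma \ref{protocolofhammingdistance} as a subroutine. The starting observation is that, since $S(k) = S(k+2)$ for all $k \in [r_0, n - r_1)$, the value $F(x,y) = S(|x \oplus y|)$ is completely determined by the parity of $|x \oplus y|$ whenever $r_0 \le |x \oplus y| \le n - r_1$. Outside this window, $|x \oplus y|$ lies in one of the two small intervals $[0, r_0)$ or $(n - r_1, n]$, each of length at most $r$, and inside these intervals $S$ can be arbitrary, so we must pin down $|x \oplus y|$ exactly in order to evaluate $S$.

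The protocol then proceeds as follows. First, Alice and Bob exchange one bit to learn the parity of $|x \oplus y|$. Next, they invoke the one-way $\mathrm{HAM}_{n, r_0}$ protocol of Lemma \ref{protocolofhammingdistance} to decide whether $|x \oplus y| \le r_0$, and apply the same protocol to the pair $(\bar x, y)$ at threshold $r_1$ to decide whether $|x \oplus y| \ge n - r_1$ (using $|\bar x \oplus y| = n - |x \oplus y|$). These two calls identify the correct regime at a total cost of $O(r \log r)$ bits. If neither condition holds, the answer is determined by parity alone, and the protocol terminates.

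If instead $|x \oplus y| \le r_0$ (the symmetric case $|x \oplus y| \ge n - r_1$ being handled by complementing one side), it remains to compute $|x \oplus y|$ exactly. I would do this by binary search, using Lemma \ref{protocolofhammingdistance} as the comparator: each of the $O(\log r)$ probes asks whether $|x \oplus y| \le d$ for some threshold $d \le r_0$, at cost $O(d \log d) = O(r \log r)$ bits per call. To control the overall error over all $O(\log r)$ probes, each call is amplified to succeed with probability $1 - \Theta(1/\log r)$ by standard majority voting, which contributes the extra $\log \log r$ factor. Summing, the total communication is $O(r \log^2 r \log \log r)$ bits.

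The main subtlety I anticipate is not the algorithmic idea itself, which is straightforward once the parity decomposition is in place, but rather the bookkeeping of error probabilities across the binary-search probes and the two boundary tests, together with a consistent treatment of inputs for which $|x \oplus y|$ sits exactly at $r_0$ or at $n - r_1$, so that the regime-detection step agrees with the subsequent exact-distance computation on those borderline instances. Once these are handled carefully so that the union bound over all $O(\log r)$ amplified subroutine calls yields total error below $1/3$, the stated cost follows.
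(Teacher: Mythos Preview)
Your proposal is correct and follows essentially the same approach as the paper: locate $|x\oplus y|$ relative to the two ``short'' intervals using Hamming-distance tests, fall back on parity in the middle region, and otherwise binary-search for the exact distance with $O(\log r)$ amplified calls to the $O(r\log r)$ subroutine of Lemma~\ref{protocolofhammingdistance}. Your explicit use of the complementation $(\bar x,y)$ to reduce the high-end threshold test to one with parameter $r_1$ is exactly what is needed to keep each call at cost $O(r\log r)$ rather than $O(n\log n)$; the paper uses the same device implicitly when it writes $\mathrm{HAM}_{n,n-r}$ and during the binary search on $(n-r,n]$.
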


\begin{proof} We construct a public-coin randomized protocol as
following. By solving $\textrm{HAM}_{n,r}$ and $\textrm{HAM}_{n,n-r}$ using $O(r\log
r)$ bits (to make the final failure probability to be small, this
step will be repeated for constant times), Alice and Bob decide which of the
three intervals that $|x\oplus y|$ lies: $[r, n-r]$, $[0, r)$, or $[n-r,n]$, with high
probability. If $|x\oplus y|\in [r, n-r]$, by the definition of
$r$, $F$ only depends on the parity of $|x\oplus y|$, which can be computed in
$O(1)$ bits of communication.  If $|x\oplus y| \in [0,r)\cup (n-r,
n]$, Alice and Bob apply a binary search for $|x\oplus y|$. Each time
they check a Hamming distance instance $Ham_{n,k}$ for some $k\in
[0, r)\cap (n-r, n]$. The exact value of $|x\oplus y|$ can be determined in
$O(\log r)$ rounds. To output a correct answer with probability more
than $2/3$, it suffices to make sure that the failure probability is
$\le 1/(4\log r)$ in every round. This can be done by repeating
the Hamming distance instance $\Theta(\log \log r)$ times in each
round. By Lemma \ref{protocolofhammingdistance}, each round uses at
most $O(r\log r\log \log r)$ bits. The total cost of this protocol
is therefore $O(r\log^2r\log\log r)$.
\end{proof}

In the above protocol, Alice and Bob interactively send messages to
determine the exact $|x\oplus y|$ by binary search in $O(\log r)$
rounds. When Bob are not allowed to send information back to Alice,
they need to enumerate all possible $|x\oplus y|$ in the interval
$[0,r)\cap (n-r,n]$. Enumeration of $|x\oplus y|=d$ can be done by
solving two Hamming distance problems $\textrm{HAM}_{n,d-1}$ and $\textrm{HAM}_{n,d}$.
To obtain large success probability finally, each problem must
be repeated $O(\log r)$ times. This leads to the following.

\begin{proposition}
  There is a $O(r^2\log^2r)$ one-way randomized protocol for any
  symmetric XOR function $F(x,y)=f(x\oplus y)$.
\end{proposition}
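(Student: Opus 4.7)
The plan is to follow exactly the two-way strategy of the previous proposition, but replace the $O(\log r)$-round binary search by an exhaustive parallel enumeration over the $O(r)$ candidate values that $|x\oplus y|$ can take in $[0,r)\cup(n-r,n]$. The overall structure, interval localization followed by exact determination, remains unchanged; only the exact-determination phase has to be rewritten for the one-way setting.

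First I would have Alice invoke the one-way $\textrm{HAM}_{n,r-1}$ and $\textrm{HAM}_{n,n-r}$ protocols of Lemma~\ref{protocolofhammingdistance} (each repeated $O(1)$ times to drive the error below a small constant) to let Bob decide which of the three intervals $[0,r)$, $[r,n-r]$, or $(n-r,n]$ contains $|x\oplus y|$. If it lies in the middle interval, then by the definition of $r$ the value $F(x,y)$ depends only on the parity of $|x\oplus y|$, and Alice can settle the problem by sending the single bit $|x|\bmod 2$. Since Bob is silent, of course, Alice actually runs every branch of the protocol simultaneously and Bob selects the appropriate output at the end; this costs $O(r\log r)$.

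For the two outer intervals Bob needs the exact value of $|x\oplus y|$. Since he cannot respond, Alice, for every threshold $d\in[0,r)\cup(n-r,n]$, runs the one-way $\textrm{HAM}_{n,d}$ protocol of Lemma~\ref{protocolofhammingdistance} in parallel. Bob then identifies $|x\oplus y|$ as the unique $d$ at which the outputs of $\textrm{HAM}_{n,d-1}$ and $\textrm{HAM}_{n,d}$ disagree. There are $O(r)$ such Hamming-distance instances; to guarantee overall success probability at least $2/3$ I would amplify each of them to failure probability $O(1/r)$ by independent repetition and majority vote, which (by a standard Chernoff bound) requires $\Theta(\log r)$ copies. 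A union bound over the $O(r)$ instances then yields the claimed correctness.

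Tallying the costs: each $\textrm{HAM}_{n,d}$ instance uses $O(r\log r)$ bits by Lemma~\ref{protocolofhammingdistance}, there are $O(r)$ distinct thresholds, and each is boosted $O(\log r)$ times, producing a grand total of $O(r)\cdot O(\log r)\cdot O(r\log r)=O(r^2\log^2 r)$ bits. The only nontrivial point to check carefully, and the only place where the one-way restriction bites compared with the two-way protocol, is the union-bound error analysis that justifies the per-instance boosting factor of $\Theta(\log r)$; everything else is a direct adaptation of the preceding two-way construction.
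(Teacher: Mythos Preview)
Your proposal is correct and follows essentially the same approach as the paper: replace the binary search by a parallel enumeration of all $O(r)$ candidate thresholds in $[0,r)\cup(n-r,n]$, boost each $\textrm{HAM}_{n,d}$ instance by $\Theta(\log r)$ repetitions so that a union bound over the $O(r)$ instances gives constant overall error, and tally $O(r)\cdot O(\log r)\cdot O(r\log r)=O(r^2\log^2 r)$ bits. The only point you should make explicit (and which the paper also leaves implicit) is that for thresholds $d\in(n-r,n]$ the bound $O(d\log d)$ from Lemma~\ref{protocolofhammingdistance} is not $O(r\log r)$ directly; you need the complement trick $|x\oplus y|>d\iff |\bar x\oplus y|<n-d$ to reduce to a Hamming-distance instance with parameter at most $r$.
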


The lower bound in Theorem \ref{quantum lower bounds} is still true
for one-way quantum communication because one-way complexity is
always larger than the corresponding two-way complexity. There remains
a quadratic gap between the lower bound and upper bound for the one-way
complexity.

\section{Discussion}

In addition to the above-mentioned question regarding one-way communication
complexity, we state two other open problems.
Our result implies that the correctness of the Log-Rank Conjecture
for the class of symmetric XOR functions. It will be interesting to extend this consequence
to the asymmetric case, and to make use of the fact
that $\rank(M_F)=\|\tilde f\|_0$ remains true for asymmetric $f$.

We may also consider the unbounded-error communication complexity of
 XOR functions. The unbound-error complexity, equivalent
with logarithm of sign-rank, has applications in other areas such as
circuit complexity, rigidity and PAC learning.
Sherstov\cite{sherstovunbounded} proved that the unbounded-error
complexity of $S(|x\wedge y|)$ is essentially
$|\{t:S(t)\not=S(t+1)\}|$. We conjecture that the unbounded-error complexity of
$S(|x\oplus y|)$ is essentially $|\{t:S(t)\not=S(t+2)\}|$. However,
Sherstov's approach does not seem to work for XOR functions because
the core technique used
--- pattern matrix cannot be embedded in a XOR function.

\end{document}